\documentclass[11pt]{article}

\usepackage{amsmath}
\usepackage{amsthm}
\usepackage{amssymb}
\usepackage{dsfont}
\usepackage{enumerate}
\usepackage{mathrsfs}
\usepackage[pdftex,left=1in,top=1in,bottom=1in,right=1in]{geometry}
\usepackage{graphicx}
\usepackage{caption}
\usepackage{bm}
\usepackage{color}
\usepackage{commath}
\usepackage{hyperref}
\usepackage{indentfirst}

\newcommand{\pflip}{p_{\mathsf{flip}}}
\newcommand{\C}{\mathds{C}}

\newcommand{\R}{\mathbb{R}}
\newcommand{\E}{\mathds{E}}

\newcommand{\cD}{\mathcal{D}}

\newcommand{\cS}{\mathcal{S}}
\newcommand{\cT}{\mathcal{T}}

\newcommand{\eps}{\epsilon}

\newcommand{\prn}[1]{\left( #1 \right)}

\newtheorem{thm}{Theorem}

\newtheorem{lem}[thm]{Lemma}

\newtheorem{coro}[thm]{Corollary}

\newcommand{\Ch}{\mathsf{Ch}}

\newcommand{\cR}{\mathcal{R}}

\newcommand{\poly}{\textrm{poly}}

\newcommand{\obits}{\left\{ -1,1\right\}}

\newcommand{\Rep}{\mathsf{Rep}}
\newcommand{\Flip}{\mathsf{Flip}}

\newcommand{\RIchannel}{\Ch_{(M,\cR, \pflip)}} % old channel
\newcommand{\channel}{\Ch_{M}} % new channel notation

\newcommand{\brk}[1]{\left[ #1 \right]}

% authors' comment macros

\definecolor{cerulean}{rgb}{0.0, 0.48, 0.65}

\let\originalleft\left
\let\originalright\right
\renewcommand{\left}{\mathopen{}\mathclose\bgroup\originalleft}
\renewcommand{\right}{\aftergroup\egroup\originalright}

\title{Mean-Based Trace Reconstruction over\\ Oblivious Synchronization Channels
}

\author{
Mahdi Cheraghchi\thanks{University of Michigan --  Ann Arbor. Email: \texttt{mahdich@umich.edu}}\and Joseph Downs\thanks{University of Michigan --  Ann Arbor. Email: \texttt{josdowns@umich.edu}}\and João Ribeiro\thanks{Carnegie Mellon University. Part of the work was done while at Imperial College London. Email: \texttt{jlourenc@andrew.cmu.edu}}\and Alexandra Veliche\thanks{University of Michigan --  Ann Arbor. Email: \texttt{aveliche@umich.edu}}

\thanks{This material is based upon work supported by the National Science Foundation under Grant No.\ CCF-2006455. A preliminary version of this work was presented at the 2021 IEEE International Symposium on Information Theory~\cite{CDRV21}.}
}

\hyphenation{op-tical net-works semi-conduc-tor}

\date{}

% ==========================================================
\begin{document}

\maketitle

\begin{abstract}
Mean-based reconstruction is a fundamental, natural approach to worst-case trace reconstruction over channels with synchronization errors.
It is known that $\exp(\Theta(n^{1/3}))$ traces are necessary and sufficient for mean-based worst-case trace reconstruction over the deletion channel, and this result was also extended to certain channels combining deletions and geometric insertions of uniformly random bits.
In this work, we use a simple extension of the original complex-analytic approach to show that these results are examples of a much more general phenomenon.
We introduce \emph{oblivious synchronization channels}, which map each input bit to an arbitrarily distributed sequence of replications and insertions of random bits.
This general class captures all previously considered synchronization channels.
We show that for any oblivious synchronization channel whose output length follows a sub-exponential distribution either mean-based trace reconstruction is impossible or $\exp(O(n^{1/3}))$ traces suffice for this task.
\end{abstract}

\section{Introduction}\label{sec:intro}
When any length-$n$ message $x\in\obits^n$ is sent through a noisy channel $\Ch$, the channel modifies the input $x$ in some way to produce a distorted copy of $x$, which we call a \emph{trace}.
The goal of worst-case trace reconstruction over $\Ch$ is to design an algorithm which recovers any input string $x\in\obits^n$ with high probability from as few independent and identically distributed (i.i.d.)\ traces as possible.
This problem was first introduced by Levenshtein~\cite{Lev01,Lev01b}, who studied it over combinatorial channels causing synchronization errors, such as worst-case deletions and insertions of symbols and certain discrete memoryless channels.
Trace reconstruction over the \emph{deletion channel}, which independently deletes each input symbol with some probability, was first considered by Batu, Kannan, Khanna, and McGregor~\cite{BKKM04}.
Some of their results were quickly generalized to what we call the \emph{geometric insertion-deletion channel}~\cite{KM05,VS08}, which prepends a geometric number of independent, uniformly random symbols to each input symbol and then deletes it with a given probability.
Both the deletion and geometric insertion-deletion channels are examples of discrete memoryless synchronization channels~\cite{Dob67,CR20}.

Holenstein, Mitzenmacher, Panigrahy, and Wieder~\cite{HMPW08} were the first to obtain non-trivial worst-case trace reconstruction algorithms for the deletion channel with constant deletion probability.
They showed that $\exp(\widetilde{O}(\sqrt{n}))$ traces suffice for \emph{mean-based} reconstruction of any input string with high probability, where $\widetilde{O}(\cdot)$ hides polylogarithmic factors.
By mean-based reconstruction, we mean that the reconstruction algorithm only requires knowledge of the expected value of each trace coordinate.
In general, this procedure works as follows:
Let $Y_x=(Y_{x,1},Y_{x,2},\dots)$ denote the \emph{trace distribution} on input $x\in\{-1,1\}^n$ and $Y'_x$ denote the infinite string obtained by padding $Y_x$ with zeros on the right.
The \emph{mean trace} $\mu_x$ is given by
\begin{equation*}
    \mu_x=(\E[Y'_{x,1}],\E[Y'_{x,2}],\dots).
\end{equation*}
As the first step, the algorithm estimates $\mu_x$ from $t$ traces $T^{(1)},T^{(2)},\dots,T^{(t)}$ sampled i.i.d.\ according to $Y_x'$ via the empirical means
\begin{equation}\label{eq:muhat}
    \widehat{\mu}_i=\frac{1}{t}\sum_{j=1}^t T^{(j)}_i, \quad i=1,2,\dots.
\end{equation}
Subsequently, it outputs the string $\widehat{x}\in\obits^n$ that minimizes $\|\mu_{\widehat{x}}-\widehat{\mu}\|_1$.
If $t=t(n)$ is large enough, we have $\widehat{x}=x$ with high probability over the randomness of the traces.
Because of their structure, pinpointing the number of traces required for mean-based reconstruction over any channel reduces to bounding $\|\mu_x-\mu_{x'}\|_1$ for any pair of distinct strings $x,x'\in\obits^n$.
Overall, mean-based reconstruction is a natural paradigm, and it is not only useful over channels with synchronization errors.
For example, $O(\log n)$ traces suffice for mean-based reconstruction over the binary symmetric channel, which is optimal.

More recently, an elegant complex-analytic approach was employed concurrently by De, O'Donnell, and Servedio~\cite{DOS17} and by Nazarov and Peres~\cite{NP17} to show that \mbox{$\exp(O(n^{1/3}))$} traces suffice for mean-based worst-case trace reconstruction not only over the deletion channel with constant deletion probability, but also over the more general geometric insertion-deletion channel we described previously.\footnote{Nazarov and Peres~\cite{NP17} consider a slightly modified geometric insertion-deletion channel: First, a geometric number of independent, uniformly random symbols is added independently before each input symbol. Then, the resulting string is sent through a deletion channel. The analysis is similar to that of the geometric-insertion channel.}
Remarkably, $\exp(\Omega(n^{1/3}))$ traces were shown to also be necessary for mean-based reconstruction over the deletion channel.

Given the fundamental nature of mean-based reconstruction and this state of affairs, the following question arises naturally: \emph{Are these results examples of a much more general phenomenon?} In particular, is it true that $\exp(O(n^{1/3}))$ traces suffice for mean-based trace reconstruction over a much more general class of synchronization channels? In this work, we introduce and study the general class of \emph{oblivious synchronization channels}. We use the term \emph{oblivious} to describe channels that behave in an i.i.d.\ manner for each input bit and use randomness that is independent of the input.
This class is a significant generalization of all synchronization channels previously studied in the context of trace reconstruction.
We make progress in this direction by showing that a simple extension of the analysis from~\cite{DOS17,NP17} yields the same result for all oblivious synchronization channels satisfying a mild assumption already present in~\cite{DOS17,NP17}.
% that result for a much broader class of such channels which map each input symbol to an \emph{arbitrarily distributed} sequence of noisy symbol replications and insertions of random symbols, under a mild assumption.

Research in this direction has other practical and theoretical implications.
First, studying trace reconstruction over channels introducing more complex synchronization errors than simple i.i.d.\ deletions is fundamental for the design of reliable DNA-based data storage systems with nanopore-based sequencing~\cite{YKG+15,YGM17,OAC+18}.
Second, understanding the structure of the mean trace of a string is a natural information-theoretic problem which may lead to improved capacity bounds and coding techniques for channels with synchronization errors, both notoriously difficult problems (see the extensive surveys~\cite{Mit09,MBT10,CR20,HS21}).

\subsection{Related Work}

% \jnote{update references to papers which have been published in the meantime, eg Chase (STOC 2021), Grigorescu-Sudan-Zhu (ISIT 2021)...}
% \anote{Done for GSZ21. Is Chase21 in addition to or instead of the current references?}
% \jnote{Chase21 should replace the current reference to arXiv in [26]. Also, my list above was not necessarily complete. It's always good to go over all arxiv refs, see which have been published in the meantime, and update the refs. E.g., at least [31], [34], [38] should be updated. Also, you should compare the references [35], [36], [37] you added to the other arXiv refs in the paper. There's a lot missing there. You should pull the bibtex files from arxiv.}
Besides the works mentioned above, there has been significant recent interest in various notions of trace reconstruction.
The mean-based approach of~\cite{DOS17,NP17} has proven useful to some problems incomparable to our general setting: the deletion channel with position- and symbol-dependent deletion probabilities satisfying strong monotonicity and periodicity assumptions~\cite{HHP17}; a combination of the geometric insertion-deletion channel and random shifts of the output string as an intermediate step in the design of \emph{average-case} trace reconstruction algorithms (which are only required to succeed with high probability when the input is uniformly random)~\cite{PZ17,HPP18}; trace reconstruction of trees with i.i.d.\ deletions of vertices~\cite{DRR19}; trace reconstruction of matrices with i.i.d.\ deletions of rows and columns~\cite{KMMP19}; trace reconstruction of circular strings over the deletion channel~\cite{NR21}.
In another direction, Grigorescu, Sudan, and Zhu~\cite{GSZ21} and Sima and Bruck~\cite{SB21} studied the performance of mean-based reconstruction for distinguishing between strings at low Hamming or edit distance from each other over the deletion channel.

Different complex-analytic methods have been used to obtain the current best upper bound of $\exp(\widetilde{O}(n^{1/5}))$ traces on the trace complexity of the deletion channel~\cite{C21}, as well as upper bounds for trace reconstruction of ``smoothed'' worst-case strings over the deletion channel~\cite{CDLSS20}.
However, mean-based reconstruction remains the state-of-the-art approach for the geometric insertion-deletion channel.

Other related problems considered include the already-mentioned average-case trace reconstruction problem over the deletion and geometric insertion-deletion channels~\cite{BKKM04,HMPW08,MPV14,PZ17,HPP18}, trace reconstruction over the deletion and geometric insertion-deletion channels with vanishing deletion probabilities~\cite{BKKM04,KM05,VS08,CDLSS20b}, trace complexity lower bounds for the deletion channel~\cite{BKKM04,MPV14,HL20,Cha20}, trace reconstruction of coded strings over the deletion channel~\cite{CGMR20,BLS19}, approximate trace reconstruction~\cite{DRRS20,CDK21,CP21,CDLSS21}, alternative trace reconstruction models motivated by immunology~\cite{BPRS20}, and population recovery over the deletion and geometric insertion-deletion channels~\cite{BCFSS19,BCSS19,Nar20}.

\subsection{Notation}
For convenience, we denote discrete random variables and their corresponding distributions by uppercase letters, such as $X$, $Y$, and $Z$.
The expected value of $X$ is denoted by $\E[X]$.
Sets are denoted by calligraphic uppercase letters such as $\cS$ and $\cT$, and we write $[n]:=\{1,2,\dots,n\}$.
The open disk of radius $r$ centered at $z\in\C$ is $\cD_r(z):=\{z'\in\C:\abs{z-z'}<r\}$.
The $1$-norm of vector $x$ is denoted by $\|x\|_1$.
The concatenation of strings $x$ and $y$ is denoted by $x\|y$.
For any random variable $X$ supported on the set of non-negative integers, we denote its probability generating function by $g_X(\cdot)$.
For two functions $f$ and $g$, we use $f(x)\sim g(x)$ to mean that $\lim_{x\to\infty}\frac{f(x)}{g(x)}=1$.

\subsection{Channel Model}\label{sec:channel}

We introduce and study a general model of discrete memoryless synchronization channels that, in particular, captures the models studied in~\cite{KM05,VS08,DOS17,NP17,CDRV21}.
An \emph{oblivious synchronization channel} $\channel$ is characterized by a random variable $M$ and a corresponding collection of randomized functions \mbox{$F_M:\obits\to\obits^M$}.
To avoid trivial settings where trace reconstruction is impossible, we require that $\Pr[M>0]>0$.
% \anote{Description of channel model clear?}
% \jnote{both the current phrasing and Joseph's proposed phrasing look okay, you both can decide on which one we should go for.}
On each input $x_i\in\obits$, the channel samples $m$ from $M$
% \joenote{Proposed re-phrasing:}
% \joenote{and randomly partitions the set of output coordinates $[M]$ into four sets $\Rep$, $\Flip$, $C_+$, and $C_-$. Output bits in $\Rep$ are set to $x_i$, output bits in $\Flip$ are set to $-x_i$, output bits in $C_+$ are set to $+1$ regardless of the input, and output bits in $C_-$ are set to $-1$ regardless of the input.}
and decides which positions of the output are $x_i$ replicated, flipped, have a value of -1, or have a value of 1.
The sets corresponding to these positions are denoted by $\Rep$, $\Flip$, $C_+$, and $C_-$, respectively.
These sets are so named because $\Rep$ replicates the input bit, $\Flip$ flips it, $C_+$ is constantly $+1$, and $C_-$ is constantly $-1$.
Note that these sets partition the output length $[m]$ according to an arbitrary distribution independent of the input $x$.
This sampling determines a function $f:\obits\to\obits^m$.
In other words,
\begin{align*}
    \Rep&:=\{j\in[m] : f(-1)_j=-1\text{ and } f(1)_j=1\},\\
    \Flip&:=\{j\in[m] : f(-1)_j=1\text{ and } f(1)_j=-1\},\\
    C_+&:=\{j\in[m] : f(-1)_j=1=f(1)_j\},\\
    C_-&:=\{j\in[m] : f(-1)_j=-1=f(1)_j\},
\end{align*}
where $f(y)_j$ denotes the $j^\text{th}$ coordinate of $f(y)$.
The channel evaluates this function at $x_i$ to obtain $f(x_i)$.
From here onward, we use $\Rep$ and $\Flip$ to denote the sets as well as their corresponding distributions.

As an example, we now show how the replication-insertion channel from~\cite{CDRV21} is an instance of the oblivious synchronization channel. Note that the deletion and geometric insertion-deletion channels are particular examples of the replication-insertion channel.
A replication-insertion channel $\RIchannel$ is determined by three parameters a flip probability $\pflip\in[0,\frac{1}{2})$, a sub-exponential output length distribution $\mathcal{M}$ over the non-negative integers, and a replication distribution $\mathcal{R}$ over subsets of $[M]$.
% is a collection of distributions over subsets of $[M]$ where $M$ is in the support of $\mathcal{M}$. \jnote{what is $M$? A parameter? A distribution?} \anote{here $\pflip, M,$and $R$ are parameters...how to rephrase more clearly?}. \joenote{I think I have fixed it.}
% \jnote{Note you're now using mathcal for distributions and sets. It's confusing. I just meant that we should be more explicit and say that ``... is determined by a flip probability blah, an output length distribution bleh, and a replication distribution blih.'' Then, you describe the action of the channel as you do below.}
For any given bit $x\in\obits$, the channel samples $(M_i,R_i)$ according to the joint distribution $(M,R)$.
It then produces the output string $Y_x\in\obits^{M_i}$ bit-wise by defining $Y_{x,j}=-x$ with probability $\pflip$ and $Y_{x,j}=x$ with probability $1-\pflip$ if $j\in R_i$ and sample $Y_{x,j}$ uniformly from $\obits$  otherwise.
Let $m=M_i$ and $F_m$ be the collection of randomized functions, where the randomness comes from sampling $\pflip\leftarrow [0,\frac{1}{2})$ and $R_i\leftarrow 2^{[m]}$.
We sample the sets $\Rep, \Flip, C_+,$ and $C_-$ in the following way:
For each position $j\in[m]$, if $j\in R_i$ we put $j$ in $\Rep$ with probability $1-\pflip$ and in $\Flip$ with probability $\pflip$.
Otherwise, we choose a bit uniformly at random.
If the random bit is $1$, we put $j$ in $C_+$, and if it is $-1$, we put $j$ in $C_-$.

\subsection{Our Contributions}

Our main theorem shows that previous results on mean-based trace reconstruction over the deletion and geometric insertion-deletion channels are examples of a much more general phenomenon.

\begin{thm}\label{thm:mainresult}

    Let $\channel$ be an oblivious synchronization channel where $M$ is a sub-exponential random variable.\footnote{A random variable $M$ is \emph{sub-exponential} if there exists a constant $\alpha>0$ such that $\Pr[|M|\geq \tau]\leq 2 e^{-\alpha \tau}$ for all $\tau\geq 0$.}
    Define the random variables $W_R$ and $W_F$ with probability mass functions
    \begin{equation*}
        W_R(j):=\frac{\Pr[j+1\in\Rep]}{\E[|\Rep|]} \text{ and }
        W_F(j):=\frac{\Pr[j+1\in\Flip]}{\E[|\Flip|]}, j=0,1,2,...,
    \end{equation*}
    and let $g_{W_R}$, $g_{W_F}$ be their respective probability generating functions. If $\Flip$ (or $\Rep$) is always the empty set, then define $g_{W_F}=0$ (or $g_{W_R}=0$). If
    \begin{equation}\label{eq:mbtrcond}
        \E[|\Rep|] \cdot g_{W_R}(\cdot) \not\equiv \E[|\Flip|] \cdot g_{W_F}(\cdot),
    \end{equation}
    then $\exp(O(n^{1/3}))$ traces are sufficient for mean-based trace reconstruction over $\channel$ with success probability \mbox{$1-e^{-\Omega(n)}$}.
    If~\eqref{eq:mbtrcond} is not satisfied, then mean-based trace reconstruction is impossible.
\end{thm}

Note that many common distributions are sub-exponential
% \footnote{A random variable $M$ is \emph{sub-exponential} if there exists a constant $\alpha>0$ such that $\Pr[|M|\geq \tau]\leq 2 e^{-\alpha \tau}$ for all $\tau\geq 0$.}
, including geometric, Poisson, and all finitely-supported distributions.
In general, $\E[|\Rep|]$ or $\E[|\Flip|]$ could be infinite, so these distributions are not always well defined.
However, this is not a problem because our theorem only applies to channels where $M$ is a sub-exponential  random variable.
Because $M$ is sub-exponential, it has finite expectation.
Here $|\Rep|$ and $|\Flip|$ are both upper bounded by $M$, so they also have finite expectation and $W_R, W_F$ are valid distributions.

\section{Proof of Theorem~\ref{thm:mainresult}}\label{sec:extensioninsertdelete}

Fix an oblivious synchronization channel $\channel$, where $M$ is a sub-exponential random variable and
$\E[|\Rep|] \cdot g_{W_R}(z) \neq \E[|\Flip|] \cdot g_{W_F}(z)$ for some $z\in\C$.
To every string \mbox{$x\in\{-1,1\}^n$}, we can associate a polynomial $P_x$ over $\C$ defined as
\begin{equation*}
    P_x(z):=\sum_{i=1}^n x_i z^{i-1}.
\end{equation*}
Then, using the definition of mean trace above, we define the mean trace power series $\overline{P}_x$ as
\begin{equation*}
    \overline{P}_x(z):=\sum_{i=1}^{\infty} \mu_{x,i} z^{i-1},
\end{equation*}
where $\mu_{x,i}$ denotes the $i^\text{th}$ coordinate of the mean trace.
Let $N>0$ and denote the mean trace truncated at the $N^\text{th}$ coordinate by
\begin{equation*}
    \mu^N_x:=(\mu_{x,1},\dots,\mu_{x,N}).
\end{equation*}

To prove Theorem~\ref{thm:mainresult}, we will show that there exists a constant $C>0$ such that for a large enough $n$, appropriate $N$, and \emph{any} distinct input strings $x,x'\in\obits^n$, their truncated mean traces satisfy
\begin{equation}\label{eq:truncdifferencebound}
    \norm{\mu_x^N-\mu_{x'}^N}_1=\sum_{i=1}^N\abs{\mu_{x,i}-\mu_{x',i}}\geq \delta(n):= e^{-C n^{1/3}}.
\end{equation}
This implies that $\exp(O(n^{1/3}))$ traces suffice for mean-based worst-case trace reconstruction as follows:
Let $x$ be the true input and suppose that we have access to \mbox{$t:=n/\delta(n)^2=\exp(O(n^{1/3}))$} traces.
Then a direct application of the Chernoff bound and a union bound over all coordinates $i=1,\dots,N$ shows that the empirical mean trace $\widehat{\mu}^N=(\widehat{\mu}_1,\dots,\widehat{\mu}_N)$ defined in~\eqref{eq:muhat} satisfies
\begin{equation}\label{eq:goodestimate}
    \norm{\widehat{\mu}^N-\mu_x^N}_1\leq \frac{\delta(n)}{4}
\end{equation}
with probability at least $1-e^{-\Omega(n)}$ over the randomness of the traces.
On the other hand, if~\eqref{eq:goodestimate} holds, we can combine it with~\eqref{eq:truncdifferencebound} and the triangle inequality to get
\begin{equation*}
    \norm{\widehat{\mu}^N-\mu_{x'}^N}_1\geq \frac{3\delta(n)}{4}
\end{equation*} for all $x'\neq x$.
This allows us to recover $x$ naively from $\widehat{\mu}$ by computing $\mu^N_{\widehat{x}}$ for every $\widehat{x}\in\obits^n$ and outputting the $\widehat{x}$ that minimizes $\norm{\widehat{\mu}^N-\mu^N_{\widehat{x}}}$.

We prove~\eqref{eq:truncdifferencebound} by relating $\norm{\widehat{\mu}^N-\mu_x^N}_1$ to $|\overline{P}_x(z)-\overline{P}_{x'}(z)|$ for an appropriate choice of $z\in\C$.
Assuming that $|z|\geq 1$, by the triangle inequality we have
\begin{align*}
    \abs{\overline{P}_x(z)-\overline{P}_{x'}(z)}\nonumber&\leq\sum_{i=1}^\infty\abs{\mu_{x,i}-\mu_{x',i}}|z|^{i-1}\nonumber\\
    &=\sum_{i=1}^N \abs{\mu_{x,i}-\mu_{x',i}}|z|^{i-1}+\sum_{i=N+1}^\infty \abs{\mu_{x,i}-\mu_{x',i}}|z|^{i-1}\nonumber\\
    &\leq |z|^{N}\left\|\mu^N_x-\mu^N_{x'}\right\|_1+\sum_{i=N+1}^\infty \abs{\mu_{x,i}-\mu_{x',i}}|z|^{i-1}
\end{align*}
for every $z\in\C$ such that $|z|\geq 1$.
Rearranging, it follows that $\left\|\mu^N_x-\mu^N_{x'}\right\|_1$ is lower-bounded by
\begin{equation}\label{eq:meantopow}
    |z|^{-N}\left(\abs{\overline{P}_x(z)-\overline{P}_{x'}(z)}-\sum_{i=N+1}^\infty \abs{\mu_{x,i}-\mu_{x',i}}|z|^{i-1}\right)
\end{equation}
for any such $z$.
The lower bound in~\eqref{eq:truncdifferencebound}, and thus Theorem~\ref{thm:mainresult}, follows by combining~\eqref{eq:meantopow} with the next two lemmas, each bounding a different term in the right-hand side of~\eqref{eq:meantopow}.
\begin{lem}\label{lem:pow}
    There exist constants $c_1,c_2>0$ such that for $n$ large enough and any distinct strings $x,x'\in\obits^n$, it holds that $\abs{\overline{P}_x(z)-\overline{P}_{x'}(z)}\geq e^{-c_1 n^{1/3}}$ for some $z$ satisfying $1\leq |z|\leq e^{c_2 n^{-2/3}}$.
\end{lem}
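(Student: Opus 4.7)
The plan hinges on a generating-function identity. A direct calculation from the definition of $\channel$ yields
\[
    \overline{P}_x(z) \;=\; (1-2\pflip)\, B(z)\, P_x(A(z)),
\]
where $A(z) := \E[z^{M}] = \sum_{k \geq 0} \Pr[M=k]\, z^{k}$ is the probability generating function of $M$ and $B(z) := \E\!\left[\sum_{j \in \cR} z^{j-1}\right]$. Indeed, the $i$-th input block $Y_{x_i}$ occupies positions $L_i + 1, \dots, L_i + m_i$ of the trace, with $L_i := m_1 + \cdots + m_{i-1}$ independent of $(m_i,\cR_i)$; bits outside $\cR_i$ have zero mean while bits inside $\cR_i$ have mean $(1-2\pflip)\,x_i$, and block independence gives $\E[z^{L_i}] = A(z)^{i-1}$. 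Crucially, because $M$ is sub-exponential, both $A$ and $B$ are analytic on some disk $\{|z| < 1 + \alpha\}$ strictly containing the closed unit disk, with $A(1)=1$, $A'(1)=\E[M]>0$, and $B(1)=\E[|\cR|]>0$. Subtracting,
\[
    \overline{P}_x(z) - \overline{P}_{x'}(z) \;=\; (1-2\pflip)\, B(z)\, Q(A(z)),
\]
where $Q(w) := \sum_{i=1}^{n}(x_i - x'_i)\, w^{i-1}$ is a non-zero polynomial of degree at most $n-1$ with coefficients in $\{-2,0,2\}$.

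The next step is to invoke a Littlewood-polynomial lower bound in the spirit of~\cite{DOS17,NP17}: there exist constants $c_0,L_0>0$ such that for every non-zero polynomial of degree at most $n-1$ with constant-bounded coefficients, some $\theta^*\in[-L_0 n^{-1/3}, L_0 n^{-1/3}]$ satisfies $|Q(e^{i\theta^*})|\geq e^{-c_0 n^{1/3}}$. Set $w^*:=e^{i\theta^*}$, and use $A'(1)=\E[M]>0$ together with the holomorphic inverse function theorem to obtain a local analytic inverse of $A$ at $1$, so that $z^*:=A^{-1}(w^*)$ is well defined for $n$ large. Solving $A(z^*)=e^{i\theta^*}$ order-by-order in $\theta^*$ produces
\[
    z^* - 1 \;=\; \frac{i\theta^*}{\E[M]} \;+\; b\,(\theta^*)^2 \;+\; O\!\left((\theta^*)^3\right),
\]
with $b=(\mathrm{Var}(M)-\E[M])/(2\,\E[M]^3)$ real, and expanding $|1+u|^2 = 1 + 2\mathrm{Re}(u) + |u|^2$ collapses the quadratic contribution into
\[
    |z^*|^2 \;=\; 1 + \frac{\mathrm{Var}(M)}{\E[M]^3}\,(\theta^*)^2 \;+\; O\!\left((\theta^*)^3\right).
\]
Hence $|z^*|^2\geq 1$ for $|\theta^*|$ small (with equality when $M$ is deterministic) and $|z^*|^2\leq 1 + C n^{-2/3}$, giving $|z^*|\in[1,e^{c_2 n^{-2/3}}]$ for a suitable $c_2>0$. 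Finally, continuity of $B$ at $1$ with $B(1)>0$ yields $|B(z^*)|\geq B(1)/2$ for $n$ large, and combining the three factors,
\[
    |\overline{P}_x(z^*) - \overline{P}_{x'}(z^*)| \;\geq\; (1-2\pflip)\cdot\tfrac{1}{2}\E[|\cR|]\cdot e^{-c_0 n^{1/3}} \;\geq\; e^{-c_1 n^{1/3}}
\]
for a suitable $c_1>0$, proving the lemma.

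The main obstacle is the Littlewood lower bound on the \emph{short} arc $|\arg w|\leq L_0 n^{-1/3}$ rather than the classical constant-size arc of~\cite{NP17}. On a constant-size arc, the inversion $A^{-1}$ would force $|z^*|$ a positive constant above $1$, violating the $|z^*|\leq e^{c_2 n^{-2/3}}$ ceiling in the lemma. I would therefore adapt (or directly transplant) the short-arc version that underlies the mean-based analyses of~\cite{DOS17,NP17} (a refined Turán-type inequality on Littlewood polynomials); everything else is bookkeeping that uses only $A'(1)>0$, continuity of $B$ at $1$, and the analyticity disk around $1$ guaranteed by the sub-exponential tail of $M$.
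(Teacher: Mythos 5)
Your proposal is correct and follows essentially the same route as the paper: the factorization $\overline{P}_x(z)=(1-2\pflip)B(z)P_x(A(z))$ is exactly the paper's change-of-variable lemma (your $B$ is $\E[|\cR|]\,g_W$), the ``short-arc'' bound you worry about at the end is precisely the Borwein--Erd\'elyi inequality the paper invokes with $L=n^{1/3}$ (arc $|\varphi|\le \pi n^{-1/3}$, lower bound $e^{-cn^{1/3}}$), so no further adaptation is needed, and the local inversion of $A$ at $1$ via the holomorphic inverse function theorem is the paper's Lemma on the choice of $z_\varphi$. Your explicit second-order expansion giving $|z^*|^2=1+\tfrac{\mathrm{Var}(M)}{\E[M]^3}(\theta^*)^2+O((\theta^*)^3)$ is a nice touch that actually justifies the lower bound $|z^*|\ge 1$ more carefully than the paper does.
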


\begin{lem}\label{lem:tail}
    If there exists a constant $c_3>0$ such that $1\leq |z|\leq e^{c_3 n^{-2/3}}$, then there exist constants $c_4,c_5>0$ such that $N=c_4 n$ implies
    \begin{equation*}
        \sum_{i=N+1}^\infty \abs{\mu_{x,i}-\mu_{x',i}}|z|^{i-1}\leq e^{-c_5 n}
    \end{equation*}
    for all distinct $x,x'\in\obits^n$ when $n$ is large enough.
\end{lem}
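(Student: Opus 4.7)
My plan is to control the tail of the mean trace by a ``channel length'' argument: since the length of the trace depends only on the marginal distribution of $M$ (not on the input bits), and since this length is a sum of $n$ i.i.d.\ sub-exponential random variables, it concentrates tightly around its mean $n\E[M]$, and in particular is very unlikely to exceed position $N = c_4 n$ for a suitably large constant $c_4$. The factor $|z|^{i-1}$ is far too mild to spoil this exponential decay.

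The first step is the elementary but crucial observation that $|\mu_{x,i}| \leq \Pr[|Y_x| \geq i]$ for all $i \geq 1$. Indeed, $Y'_{x,i}$ equals the $i$-th trace coordinate (which lies in $\obits$) when $|Y_x| \geq i$ and $0$ otherwise, so $|Y'_{x,i}| \leq \mathbf{1}[|Y_x| \geq i]$; the bound then follows by taking absolute values and expectations. In particular, $|\mu_{x,i} - \mu_{x',i}| \leq 2 \Pr[|Y_x| \geq i]$, and this upper bound is input-independent because the length of $Y_x$ equals $\sum_{j=1}^n M_j$, where $M_1,\dots,M_n$ are i.i.d.\ copies of $M$.

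The second step is a standard Chernoff argument. Since $M$ is sub-exponential in the paper's sense ($\Pr[|M|\geq \tau]\leq 2e^{-\alpha\tau}$), its moment generating function $\E[e^{\lambda M}]$ is finite for every $\lambda \in [0,\alpha)$. Applying the Chernoff bound $\Pr[|Y_x| \geq i] \leq e^{-\lambda i} \E[e^{\lambda M}]^n$ at a fixed $\lambda$ bounded away from $\alpha$, and choosing $c_4$ sufficiently large relative to $\log \E[e^{\lambda M}]$, yields constants $c_4,\gamma > 0$ (depending only on the sub-exponential parameter of $M$) such that $\Pr[|Y_x| \geq i] \leq e^{-\gamma i}$ whenever $i \geq c_4 n$.

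The third and final step just combines these bounds with the hypothesis $|z| \leq e^{c_3 n^{-2/3}}$. For $n$ large enough we have $\log|z| \leq \gamma/2$, so $|z|^{i-1} \leq e^{(\gamma/2) i}$, giving
\[
\sum_{i=N+1}^\infty |\mu_{x,i} - \mu_{x',i}|\,|z|^{i-1} \leq 2 \sum_{i=N+1}^\infty e^{-\gamma i/2} = O\!\left(e^{-\gamma N/2}\right) = e^{-c_5 n}
\]
for some $c_5 > 0$, proving the lemma. There is no substantive obstacle here: the only point requiring minor care is selecting a form of sub-exponential concentration matching the paper's tail-based definition, which the one-sided Chernoff inequality supplies directly, so the argument goes through uniformly for every replication-insertion channel in the model.
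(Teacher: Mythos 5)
Your proposal is correct and follows essentially the same route as the paper: bound $\abs{\mu_{x,i}-\mu_{x',i}}$ by the tail probability of the total trace length $\sum_{j=1}^n M_j$, use sub-exponentiality of $M$ to get decay $e^{-\gamma i}$ for $i\geq c_4 n$, and observe that $|z|^{i-1}\leq e^{c_3 n^{-2/3}(i-1)}$ grows too slowly to matter. The only difference is that you derive the tail bound via a direct Chernoff/MGF computation where the paper cites Bernstein's inequality; both are valid in this large-deviation regime.
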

Invoking Lemmas~\ref{lem:pow} and~\ref{lem:tail}, we have that for $n$ large enough and any distinct $x,x'\in\obits^n$, there exists an appropriate choice $z^\star\in\C$ possibly depending on $x$ and $x'$ which satisfies \mbox{$1\leq |z^\star|\leq e^{c_2 n^{-2/3}}$} and by setting $z=z^\star$ and $N=c_4 n$ in~\eqref{eq:meantopow} yields
\begin{align*}
\norm{\mu_x^N-\mu_{x'}^N}_1 &\geq e^{-c_4\cdot c_2 n^{1/3}}\left(e^{-c_1 n^{1/3}}-e^{-c_5 n}\right)\\
&\geq e^{-C n^{1/3}}
\end{align*}
for some constant $C>0$, implying~\eqref{eq:truncdifferencebound}.

We prove Lemmas~\ref{lem:pow} and~\ref{lem:tail} in Sections~\ref{sec:proofpow} and~\ref{sec:prooftail}, respectively, which completes the argument.

\section{Proof of Lemma~\ref{lem:pow}}\label{sec:proofpow}

Our proof of Lemma~\ref{lem:pow} follows the blueprint of~\cite[Sections 4 and 5]{DOS17} and~\cite[Sections 2 and 3]{NP17}.
The key differences lie in Lemmas~\ref{lem:choicez} and~\ref{lem:changevar} below.
Lemma~\ref{lem:choicez} requires analyzing the local behavior of the inverse of an \emph{arbitrary} probability generating function (PGF) in the complex plane around $z=1$.
Remarkably, the desired behavior follows by combining the standard inverse function theorem for analytic functions with basic properties of PGFs.
In contrast, the PGFs associated to the deletion and geometric insertion-deletion channels treated in~\cite{DOS17,NP17,PZ17,HPP18} are all M\"obius transformations, meaning that their inverses have simple explicit expressions which were then easily analyzed directly.
Lemma~\ref{lem:changevar} generalizes~\cite[Section 4 and Appendix A.3]{DOS17} and~\cite[Lemmas 2.1 and 5.2]{NP17} to arbitrary sub-exponential oblivious synchronization channels well beyond the deletion and geometric insertion-deletion channels.

As a first step, we show that the mean trace power series $\overline{P}_x$ is related to the input polynomial $P_x$ through a change of variable.
This allows us to bound $\abs{\overline{P}_x(z)-\overline{P}_{x'}(z)}$ in terms of $\abs{P_x(w)-P_{x'}(w)}$ for some $w$ related to $z$.
To do this, we first derive an expression for $\overline{P}_x(z)$.
% \jnote{Why do we need Lemma 4? Doesn't it just add complexity to the paper and it'd be much simpler to just proceed directly to Corollary 5? The constant stuff is anyway deleted when we consider the difference of two mean trace power series.}

% previous location of old Lemma 4

% \joenote{Re-label corollary 5 as a lemma. Take what is currently lemma 4 and move it to the proof section.}

% $$\overline{P}_x(z)=P_x(g_M(z))\left(\E[|\Rep|]\cdot g_{w_R}(z)-\E[|\Flip|]\cdot g_{w_F}(z) \right).$$

% \begin{lem}\label{lem:changevar}
% Let $\channel$ be a replication-insertion channel. Suppose $\E \left[\abs{\cR} \right]>0$ is finite. Let $W$ be the distribution given by
% \begin{equation*}
%     W(j)=\frac{\Pr[j+1\in \cR]}{\E \left[\abs{\cR} \right]}, \quad j=0,1,2\dots,
% \end{equation*}
% and $g_M$ and $g_W$ be the probability generating functions of $M$ and $W$, respectively. Then for every $x\in\{-1,1\}^n$ and $z\in\C$ such that $z$ is in the disk of convergence of both $g_M$ and $g_W$,  we have
% $\overline{P}_x(z)=(1-2 \pflip)\cdot \E \left[\abs{\cR} \right]\cdot g_W(z)\cdot P_x(g_M(z))$.
% \end{lem}

\begin{lem}\label{lem:meantracediff}
Suppose $\E[M]>0$ is finite. Let $W_R,W_F$ be distributions corresponding to $\Rep$ and $\Flip$, respectively, with associated probability mass functions
\begin{equation*}
    W_R(j):=\frac{\Pr[j+1\in\Rep]}{\E[|\Rep|]} \text{ , }
    W_F(j):=\frac{\Pr[j+1\in\Flip]}{\E[|\Flip|]}, j=0,1,2,...,.
\end{equation*}
Also, let $g_{W_R},g_{W_F}$, and $g_M$ be the probability generating functions corresponding to $W_R, W_F,$ and $M$.
If either $\E[|\Rep|]=0$ or $\E[|\Flip|]=0$, we set $g_{W_R}=0$ or $g_{W_F}=0$, respectively.

% Let $\vec{1}$ be the length-$n$ string of all $1$s.
Then for every $x,x' \in \obits^n$ and all $z\in \C$ such that $z$ is in the disks of convergence of all the above $g$ power series,
\begin{equation}\label{eq:rewritemeantracediff}
    \abs{\overline{P}_x(z)-\overline{P}_{x'}(z)}=\abs{P_x(g_M(z))-P_{x'}(g_M(z))} \cdot \abs{\E[|\Rep|]\cdot g_{W_R}(z)-\E[|\Flip|]\cdot g_{W_F}(z)}.
\end{equation}
\end{lem}

Now we wish to lower bound the two terms being multiplied on the right. Analogously to~\cite{DOS17,NP17}, we use the lemma below, due to Borwein and Erd\'elyi~\cite{BE97}, to lower bound\\
\begin{equation*}
    \abs{P_x(g_M(z))-P_{x'}(g_M(z))}.
\end{equation*}

\begin{lem}[\cite{BE97}]\label{lem:LBLittlewood}
There is a universal constant $c>0$ for which the following holds:
Let\\ ${\bf a}=(a_0,...,a_{\ell-1})\in\{-1,0,1\}^\ell$ be non-zero and define $A(w):=\sum_{j=0}^{\ell-1} a_j w^j$.
Let $\gamma_L$ denote any arc of the form \mbox{$\left\{e^{i\varphi}\ : \varphi\in [\theta, \theta+\frac{1}{L} ]\right\}$}.
Then, we have $\max_{w\in\gamma_L}|A(w)|\geq e^{-cL}$ for every $L>0$.
\end{lem}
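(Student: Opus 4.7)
The plan is to follow the classical potential-theoretic strategy of Borwein--Erd\'elyi: combine Jensen's formula for the Mahler measure of $A$ with a refinement that controls $|A|$ on the complement of $\gamma_L$. The crucial feature is the integrality constraint $a_j\in\{-1,0,1\}$, which is what ultimately makes the final bound free of the degree $\ell$---naive extremal tools for polynomials of bounded sup-norm give ratios $\sup_{T}|A|/\sup_{\gamma_L}|A|$ that grow exponentially in $\ell$ and would defeat the uniformity of the bound.

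\textbf{Step 1 (Jensen).} After a trivial normalization (factor out powers of $w$, reverse coefficients if needed) assume the effective leading coefficient of $A$ is $\pm 1$, so that the Mahler measure $M(A)\geq 1$. Writing $T$ for the unit circle and $\eta:=\max_{w\in\gamma_L}|A(w)|$, and using $|\gamma_L|=2\pi/L$, Jensen's formula combined with the crude estimate $\log|A|\leq \log\eta$ on $\gamma_L$ rearranges to
\begin{equation*}
\log\eta \;\geq\; -\frac{L}{2\pi}\int_{T\setminus\gamma_L}\log|A(e^{i\varphi})|\,d\varphi.
\end{equation*}
Thus the task reduces to bounding the integral on the right by an \emph{absolute} constant, independent of $\ell$.

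\textbf{Step 2 (Controlling $|A|$ off $\gamma_L$).} Using only the trivial bound $|A|\leq \ell$ produces $\eta\geq \ell^{-L}$, which is too weak. I would attempt to sharpen this via a Phragm\'en--Lindel\"of / two-constants argument in a domain adapted to $\gamma_L$---for instance the slit disk $\cD_1(0)\setminus\gamma_L$---using a Joukowski-type conformal map sending $\gamma_L$ to a real interval, so that the harmonic measure of $\gamma_L$ can be written down explicitly. In tandem, the integrality of the coefficients should enter through a Kronecker/Lehmer-style argument separating the zeros of $A$ from the short arc, with the aim that the surviving $\log\ell$ contributions cancel, leaving an $\ell$-free estimate for $\int_{T\setminus\gamma_L}\log|A|$. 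Plugging back into Step~1 yields $\log\eta\geq -cL$ for a universal $c>0$.

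\textbf{Main obstacle.} The hardest step is precisely this cancellation of $\ell$. Every natural tool---Bernstein--Walsh, two-constants, or arc-Chebyshev comparison---leaves an $\ell$-dependent factor, typically of order $\log\ell$, which only the integrality of the coefficients can kill. It is for this reason that the authors cite~\cite{BE97} rather than reprove the lemma inline: the sharpest degree-free estimates for Littlewood-type polynomials on subarcs of the unit circle are delicate and rely on the extremal-polynomial machinery developed there, and I would lean on that result rather than attempting a self-contained derivation.
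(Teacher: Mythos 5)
Since the paper imports this lemma wholesale from~\cite{BE97} and offers no proof of its own, your final decision to defer to that citation is exactly what the paper does, and for the purposes of the reconstruction argument that is all that is needed. Your exploratory sketch, however, should not be mistaken for the actual route in~\cite{BE97}: Step~2 as you describe it (Jensen's formula with Mahler measure $\geq 1$, plus a Kronecker/Lehmer-style separation of zeros from the arc) is not how the degree-dependence is killed there, and as you yourself concede it leaves the crucial cancellation of $\ell$ unestablished, so on its own it is a genuine gap rather than a proof. The mechanism in Borwein--Erd\'elyi is different: after normalizing so that the lowest nonzero coefficient has modulus $1$, the anchor is the interior value $|A(0)|=|a_0|\geq 1$ rather than the Mahler measure on the circle, and the trivial bound $|A|\leq\ell$ is replaced by the degree-free growth estimate $|A(z)|\leq\sum_{j}|z|^j\leq(1-|z|)^{-1}$ on the open unit disk; a two-constants/harmonic-measure estimate for the arc relative to an interior point then yields $|A(0)|\leq \left(e\cdot\max_{\gamma_L}|A|\right)^{c'/L}\cdot(\text{bounded factor})$, which rearranges to $\max_{\gamma_L}|A|\geq e^{-cL}$ with $c$ absolute. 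So the integrality of the coefficients enters only through $|a_0|\geq 1$ and $|a_j|\leq 1$, not through any arithmetic property of the zeros. None of this affects the correctness of the paper's use of the lemma, but if you intend the sketch to stand as a proof outline it points in a direction that would not close.
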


This lemma implies that there is a constant \mbox{$c_1>0$} such that for every $L>0$ there exists $w_L=e^{i \varphi_L}$ with
$\abs{\varphi_L}\leq\frac{\pi}{L}$ satisfying
\begin{equation}
    \abs{P_x(w_L)-P_{x'}(w_L)}\geq e^{-c_1L}.\label{eq:BE}
\end{equation}
We can use~\eqref{eq:BE} to lower bound~\eqref{eq:rewritemeantracediff}, provided there exists $z_L$ such that $g_M(z_L)=w_L$ with good properties.
The following lemma ensures this.

\begin{lem}\label{lem:choicez}
For $L$ large enough there is a constant \mbox{$c>0$} such that for any $\varphi\in\left[-\frac{\pi}{L},\frac{\pi}{L}\right]$ there exists $z_\varphi$ satisfying \mbox{$g_M(z_\varphi)=e^{i\varphi}$}, \mbox{$1\leq \abs{z_\varphi}\leq 1+c\varphi^2$}. Moreover $|1-z_L|\geq c\varphi_L$.
\end{lem}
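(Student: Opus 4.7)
The plan is to apply the analytic inverse function theorem to the probability generating function $g_M$ at $z=1$ and then carefully analyze the behavior of the resulting local inverse along a short arc of the unit circle through $w=1$. As a setup step, I would use the sub-exponential assumption on $M$ (which also forces $|\mathcal{R}| \leq M$ to be sub-exponential) to conclude that both $g_M$ and $g_W$ are analytic on a common disk $|z| < 1 + \eta$ for some $\eta > 0$. Being probability generating functions, $g_M(1)=g_W(1)=1$, and the hypothesis $\Pr[M>0]>0$ forces $g_M'(1) = \E[M] > 0$. The inverse function theorem then provides an analytic map $h$ on a disk around $w=1$ satisfying $h(1)=1$ and $g_M(h(w)) = w$.

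Set $z_\varphi := h(e^{i\varphi})$ for $|\varphi| \leq \pi/L$, which is well-defined once $L$ is large. By construction $g_M(z_\varphi) = e^{i\varphi}$. Because $g_W$ is continuous at $z=1$ with $g_W(1)=1$, the estimate $|g_W(z_\varphi)| \geq 1/2$ is automatic for all large enough $L$.

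The main obstacle is the two-sided bound $1 \leq |z_\varphi| \leq 1 + c'\varphi^2$, which requires tracking the inverse to second order. My plan is to expand
\begin{equation*}
    g_M(z) = 1 + \E[M](z-1) + \tfrac{1}{2}\E[M(M-1)](z-1)^2 + O((z-1)^3),
\end{equation*}
invert the series coefficient-by-coefficient to obtain
\begin{equation*}
    h(w) = 1 + \frac{w-1}{\E[M]} - \frac{\E[M(M-1)]}{2\,\E[M]^3}(w-1)^2 + O((w-1)^3),
\end{equation*}
substitute $w = e^{i\varphi} = 1 + i\varphi - \tfrac{\varphi^2}{2} + O(\varphi^3)$, and then compute the squared modulus of $z_\varphi$ by summing squares of the real and imaginary parts. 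After collecting terms, the algebraic identity $-\E[M]^2 + \E[M(M-1)] + \E[M] = \mathrm{Var}(M)$ should collapse the expansion into the clean form
\begin{equation*}
    |z_\varphi|^2 = 1 + \frac{\mathrm{Var}(M)}{\E[M]^3}\,\varphi^2 + O(\varphi^3).
\end{equation*}

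The upper bound $|z_\varphi| \leq 1 + c'\varphi^2$ is then immediate by boundedness of the error term. For the lower bound $|z_\varphi| \geq 1$, the key observation is that the $\varphi^2$ coefficient $\mathrm{Var}(M)/\E[M]^3$ is non-negative. When it is strictly positive (i.e., when $M$ is non-constant), the quadratic term dominates the $O(\varphi^3)$ remainder for all sufficiently small $\varphi$ and yields $|z_\varphi|^2 \geq 1$. The only way the coefficient can vanish is if $M \equiv k$ almost surely for some integer $k \geq 1$, in which case $g_M(z) = z^k$ and I would handle the case directly by taking $z_\varphi = e^{i\varphi/k}$, which lies on the unit circle and satisfies all three requirements trivially. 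Together these cover every admissible distribution of $M$ and complete the proof.
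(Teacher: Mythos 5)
Your proposal is correct, and the setup (inverse function theorem at $z=1$ using $g_M'(1)=\E[M]>0$, setting $z_\varphi=f(e^{i\varphi})$, and getting $|g_W(z_\varphi)|\geq 1/2$ from continuity of $g_W$ at $1$) coincides with the paper's. Where you genuinely diverge is in establishing $1\leq|z_\varphi|\leq 1+c'\varphi^2$. The paper proves only the upper bound explicitly, via a first-order Taylor expansion of $f$ with $f'(1)=1/\E[M]$, the triangle inequality, and an elementary trigonometric estimate; the lower bound $|z_\varphi|\geq 1$ is left to ``basic properties of PGFs'' --- namely that $\Pr[M>0]>0$ forces $|g_M(z)|\leq g_M(|z|)<g_M(1)=1$ whenever $|z|<1$, which is incompatible with $g_M(z_\varphi)=e^{i\varphi}$. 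You instead push the inversion to second order and compute $|z_\varphi|^2=1+\frac{\mathrm{Var}(M)}{\E[M]^3}\varphi^2+O(\varphi^3)$ (your algebra checks out: the coefficient is $(-\E[M]^2+\E[M(M-1)]+\E[M])/\E[M]^3=\mathrm{Var}(M)/\E[M]^3$), which delivers both bounds simultaneously from the nonnegativity of the variance, at the price of a case split when $M$ is a.s.\ constant --- which you handle correctly with $g_M(z)=z^k$ and $z_\varphi=e^{i\varphi/k}$. Your route is more computational but fully explicit about why the curve $\varphi\mapsto z_\varphi$ leaves the unit disk only to second order, and it even identifies the sharp quadratic coefficient; the paper's route is shorter, needs no case analysis, and gets the lower bound for free from the PGF monotonicity inequality. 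You might note that the PGF argument would let you drop your case split entirely.
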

We prove Lemmas~\ref{lem:changevar} and \ref{lem:choicez} in Section~\ref{sec:proofs}.
\medskip

We bound the absolute value of the mean trace power series difference.
To ensure mean-based trace reconstruction is actually possible, we assume $\E[|\Rep|] \cdot g_{W_R}(z) \neq \E[|\Flip|] \cdot g_{W_F}(z)$. So, we split our proof into two cases: (i) $\E[|\Rep|]=\E[|\Flip|]$ and (ii) $\E[|\Rep|]\neq\E[|\Flip|]$.

\begin{enumerate}[(i)]

\item Assume that $\E[|\Rep|]=\E[|\Flip|]$. This implies $\E[|\Rep|]>0$, because if both were zero then we would have $\E[|\Rep|] \cdot g_{W_R}(z) = \E[|\Flip|] \cdot g_{W_F}(z)=0$ by our convention. Then
    \begin{align}\label{eq:mtpsdifference}
        \abs{\overline{P}_{x}(z) - \overline{P}_{x'}(z)}
        &=\abs{P_{x}(g_M(z)) - P_{x'}(g_M(z))}\cdot \E[|\Rep|]\cdot\abs{g_{W_R}(z)-g_{W_F}(z)}.
    \end{align}
    Here, $\E[|\Rep|]$ is a non-zero constant. We will lower bound the other components by
    \begin{equation}
        \abs{P_{x}(g_M(z)) - P_{x'}(g_M(z))} \geq e^{-cL} \text{ and }
        \abs{g_{W_R}(z)-g_{W_F}(z)}\geq\frac{1}{\poly(L)}
    \end{equation} for an appropriate $z=z_L$ and constant $c$.
    The polynomial $\frac{1}{2}\abs{P_{x}(g_M(z)) - P_{x'}(g_M(z))}$ is a Littlewood polynomial, so by Lemma~\ref{lem:LBLittlewood} and Lemma~\ref{lem:choicez} we can choose a $z_L$ such that $1\leq|z_L|\leq e^{c_2/L^2}$ and \mbox{$\abs{P_{x}(g_M(z)) - P_{x'}(g_M(z))} \geq e^{-c_3L}$} for some constants $c_2,c_3>0$.

    Now we show that $\abs{g_{W_R}(z_L)-g_{W_F}(z_L)}\geq\frac{1}{\poly(L)}$.
    By definition, the probability generating functions $g_{W_R}$ and $g_{W_F}$ can be written as power series:
    \begin{align*}
        g_{W_R}(z) &= \sum_{i=0}^\infty a_i (z_1)^i \\
        g_{W_F}(z) &= \sum_{i=0}^\infty b_i (z_1)^i,
    \end{align*}
    for some coefficients $a_i,b_i\in\R$.
    Let $d$ be the smallest index $i$ such that $a_i\neq b_i$. Then we can rewrite the power series difference as
    \begin{align*}
        g_{W_R}(z)-g_{W_F}(z)
            &= \sum_{i=0}^\infty (a_i-b_i) (z_1)^i \\
            &= (z-1)^d \sum_{j=0}^\infty (a_{j+d}-b_{j+d})(z-1)^d \\
            &= (z-1)^d q(z),
    \end{align*}
    where $q(z):=\sum_{j=0}^\infty (a_{j+d}-b_{j+d})(z-1)^d$.
    By definition, $q(z)$ approaches the non-zero constant $a_d-b_d$ when $z\to 1$. Hence, $g_{W_R}(z)-g_{W_F}(z) \sim (z-1)^d (a_d-b_d)$ when $z\to 1$.
    Therefore, for all $z$ sufficiently close to $1$ we have
    \begin{align}
        \abs{g_{W_R}(z)-g_{W_F}(z)}
            &\geq \frac{\abs{a_d-b_d}}{2}\cdot \abs{1-z}^d \nonumber\\
            &= \frac{\abs{a_d-b_d}}{2}\cdot \abs{1-z}^d \nonumber\\
            &= \poly(|1-z|). \label{eq:pgfDifferenceBound}
    \end{align}

    By Lemma~\ref{lem:LBLittlewood}, there exists a constant $c>0$ such that for any $L>0$ we have
    \begin{equation}
        \max_{\varphi\in[\frac{1}{2L},\frac{1}{L}]} \left\{\abs{f(e^{i\varphi})}\right\} \geq e^{-cL}.
    \end{equation}
    Let $\varphi_L$ denote the angle that achieves this maximum.
    Then by Lemma~\ref{lem:choicez},  there exists a $z_L$ such that $g_M(z_L)=e^{i\varphi_L}$ and $1\leq|z_L|\leq e^{c_2/L^2}$, and moreover $|1-z_L|\geq c_3\varphi_L$.

    Combining this with \eqref{eq:pgfDifferenceBound}, we obtain
    \begin{align*}
        \abs{g_{W_R}(z_L)-g_{W_F}(z_L)}
            &\geq \poly(|1-z_L|)\\
            &\geq \poly\left(\frac{c_4}{2L}\right)\\
            &= \frac{1}{\poly(L)}.
    \end{align*}
    Therefore, \eqref{eq:mtpsdifference} becomes
    \begin{align*}
        \abs{\overline{P}_{x-x'}(z)}
        &\geq e^{-c_3L}\cdot\frac{1}{\poly(L)}\\
        &\geq e^{-c_5L},
    \end{align*}
    for an appropriate constant $c_5$.

\item Assume that $\E[|\Rep|]\neq\E[|\Flip|]$. We can lower bound the mean trace polynomial obtained in Lemma 1 as follows.

    By the properties of probability generating functions, as $z$ approaches 1, the values $g_{W_R}(z)$ and $g_{W_F}(z)$ both approach 1.
    Then by assumption,
    \begin{align}
        \abs{\overline{P}_{x-x'}(z)}
        &=\abs{P_{x-x'}(g_M(z))}\cdot\abs{\E[|Rep|]\cdot g_{W_R}(z) -    \E[|Flip|]\cdot g_{W_F}(z)}\nonumber\\
        &\to\abs{P_{x-x'}(g_M(z))}\cdot c_6,
    \end{align}
    for some constant $c_6>0$ as $z\to 1$.

    Using Lemma~\ref{lem:choicez}, we showed in (i) that for large enough $L$ we can choose $z_L$ such that $1\leq|z_L|\leq e^{c_7/L^2}$ and $|P_{x-x'}(g_M(z_L))|\geq e^{-c_7L}$ for some constant $c_7$. This implies that for $z=z_L$ and large enough $L$,
    \begin{equation}
        \abs{\overline{P}_{x-x'}(z_L)}\geq e^{-c_8L}\cdot\frac{c_6}{2}.
    \end{equation}

\end{enumerate}

Taking $L=n^{1/3}$ for both cases (i) and (ii), we obtain the lower bound \mbox{$\abs{\overline{P}_x(z)-\overline{P}_{x'}(z)}\geq e^{-c' n^{1/3}}$} as claimed, for an appropriate constant $c'>0$.

% Set $L = n^{1/3}$. Combining~\eqref{eq:rewritemeantracediff},~\eqref{eq:pxdifferencebound}, and the fact that $|g_W(z_{L})| \geq 1/2$ for large enough $L$, we obtain
% \begin{align*}
%     \abs{\overline{P}_x(z_L)- \overline{P}_{x'}(z_L)} &\geq |C_1|\cdot\abs{g_W(z_L)}\cdot e^{-C_2L} \nonumber\\
%     &\geq e^{-C_3n^{1/3}}
% \end{align*}
% for some constant $C_3>0$ when $n$ is large enough.
% This concludes the proof of Lemma~\ref{lem:pow} assuming Lemmas~\ref{lem:changevar} and~\ref{lem:choicez}.

\subsection{Proofs of Lemmas~\ref{lem:meantracediff} and~\ref{lem:choicez}}\label{sec:proofs}

In this section, we prove the remaining lemmas.

\begin{proof}[Proof of Lemma~\ref{lem:meantracediff}]
 Lemma~\ref{lem:meantracediff} is a corollary of Lemma~\ref{lem:changevar} below. The expression for the mean trace power series in Lemma~\ref{lem:changevar} immediately gives us the expression for the difference of two such power series in Lemma~\ref{lem:meantracediff}.

\begin{lem}\label{lem:changevar}
Suppose $\E[M]>0$ is finite. Let $W_R,W_F,W_+,W_-$ be distributions corresponding to $\Rep,\Flip,C_+$, and $C_-$ respectively, given by
\begin{equation*}
    W_R(j):=\frac{\Pr[j+1\in\Rep]}{\E[|\Rep|]} \text{ , }
    W_F(j):=\frac{\Pr[j+1\in\Flip]}{\E[|\Flip|]}, j=0,1,2,...,
\end{equation*}
and analogously for $C_+, C_-$. Also let $g_{W_R},g_{W_F},g_{W_+},g_{W_-}$, and $g_M$ be the probability generating functions corresponding to $W_R, W_F, C_+, C_-,$ and, $M$. If any of $\Rep,\Flip,C_+,C_-$ have expected size zero then this is not well defined, so set the corresponding $g$ to be the constantly zero function.

Let $\vec{1}$ be the length-$n$ string of all $1$s. Then for every $x\in\obits^n$ and $z\in\C$ such that $z$ is in the disks of convergence of all the above $g$ power series.
\begin{align*}
\overline{P}_x(z)
&= P_x(g_M(z))\left(\E[|\Rep|]\cdot g_{w_R}(z)-\E[|\Flip|]\cdot g_{w_F}(z) \right)\\
&\quad + P_{\vec{1}}(z) \prn{g_{W_+}(z)\cdot \E \brk{\abs{C_+}} - g_{W_-}(z)\cdot \E \brk{\abs{C_-}}}.
\end{align*}
% \begin{multline*}
% \overline{P}_x(z) = P_x(g_M(z))\left(\E[|\Rep|]\cdot g_{w_R}(z)-\E[|\Flip|]\cdot g_{w_F}(z) \right)\\ + P_{\vec{1}}(z) \prn{g_{W_+}(z)\cdot \E \brk{\abs{C_+}} - g_{W_-}(z)\cdot \E \brk{\abs{C_-}}}
% \end{multline*}
\end{lem}

\begin{proof}[Proof of Lemma~\ref{lem:changevar}]
The channel acts on each input bit independently. For each input bit, the channel samples sets $\Rep,\Flip,C_+,$ and $C_-$ and produces an output string that depends on $x_i$ and those sets. Let $\Rep_i$ denote the set (and corresponding distribution) of coordinates $j\in[m]$ that are the result of replicating $x_i$. Similarly, $\Flip_i$ is the set of trace coordinates that result from the channel outputting $-x_i$ when acting on $x_i$. We also define $C_{-,i}$ and $C_{i,+}$ in this way.

If a trace coordinate $j$ is in $\Rep_i$, then the value at that coordinate is $x_i$. If a trace coordinate $j$ is in $\Flip_i$, then the value at that coordinate is $-x_i$. If a trace coordinate $j$ is in $C_{+,i}$, then the value at that coordinate is $+1$. If a trace coordinate $j$ is in $C_{-,i}$, then the value at that coordinate is $-1$. Combining all these observations, we can express the expected value at the trace's $j^\text{th}$ coordinate as
% \jnote{note that $ \Pr \brk{j \in C_{+,i}} - \Pr \brk{j \in C_{-,i}}$ is independent of $x$, so wouldn't it remove a lot of complexity from this section to directly write down the expression for $\mu_{x,j}-\mu_{x',j}$ directly?} \joenote{No, I just disagree. I think it's simpler to write out the expression for a single input's mean trace because we can explicitly write out "these are all the cases where it's 1 and these are all the cases where it's -1, so this is the expectation". The formula for the difference of mean traces is shorter, but it implicitly depends on already knowing what the expression for a single mean trace looks like and why.}
% \jnote{sure, leave it as is}
\begin{equation}
    \mu_{x,j} = \sum_{i=1}^n \Pr \brk{j \in \Rep_i}\cdot x_i - \Pr \brk{j \in \Flip_i}\cdot x_i + \Pr \brk{j \in C_{+,i}} - \Pr \brk{j \in C_{-,i}}.
\end{equation}
It follows immediately from the definition of the mean trace power series that

\begin{equation}\label{eq:meantracepowersum}
\overline{P}_x(z) = \sum_{j=1}^{\infty} \brk{\sum_{i=1}^n \Pr \brk{j \in \Rep_i}\cdot x_i - \Pr \brk{j \in \Flip_i}\cdot x_i + \Pr \brk{j \in C_{+,i}} - \Pr \brk{j \in C_{-,i}}}z^{j-1}.
\end{equation}

We show that
\begin{align*}
\sum_{j=1}^\infty\sum_{i=1}^n \Pr[j\in\Rep_i]\cdot x_i \cdot z^{j-1} &=\E[|\Rep|]\cdot g_{w_R}(z)\cdot P_x(g_M(z)) \\
\sum_{j=1}^\infty\sum_{i=1}^n \Pr[j\in\Flip_i]\cdot x_i \cdot z^{j-1} &=\E[|\Flip|]\cdot g_{w_F}(z)\cdot P_x(g_M(z)) \\
\sum_{j=1}^\infty\sum_{i=1}^n \Pr \brk{j \in C_{+,i}} \cdot z^{j-1} &= \E \brk{\abs{C_+}} \cdot g_{W_+}(z)\cdot P_{\vec{1}}(z)  \\
\sum_{j=1}^\infty\sum_{i=1}^n \Pr \brk{j \in C_{+,i}} \cdot z^{j-1}
&= \E \brk{\abs{C_-}} \cdot g_{W_-}(z)\cdot P_{\vec{1}}(z)
\end{align*}
where $\vec{1} \in \obits^n$ is the length-$n$ string of $1$'s. Plugging these equalities into~\eqref{eq:meantracepowersum} yields lemma~\ref{lem:changevar}.

Let \mbox{$M^{(\ell)}:=\sum_{k=1}^\ell M_k$}, where the $M_k:=|Y_{x_k}|$ denote the lengths of the channel outputs associated to each input bit $x_k$ and are i.i.d.\ according to $M$. For the first claim, we rewrite
\begin{equation}
    \sum_{j=1}^\infty\sum_{i=1}^n \Pr[j\in\Rep_i]\cdot x_i \cdot z^{j-1}=\sum_{i=1}^n x_i \sum_{j=1}^\infty \Pr[j\in\Rep_i] \cdot z^{j-1}.
\end{equation}
Then by the channel definition, we have
\begin{align}
    \sum_{j=1}^\infty \Pr[j\in\Rep_i]\cdot z^{j-1}&=\sum_{j=1}^\infty \Pr[M^{(i-1)}<j,j\in\Rep_i]\cdot z^{j-1}\nonumber
    \\
    &=\sum_{j=1}^\infty \Pr[j\in\Rep_i|M^{(i-1)}<j]\cdot \Pr[M^{(i-1)}<j]\cdot z^{j-1}\nonumber
    \\
    &=\sum_{j=1}^\infty\sum_{j'=0}^{j-1}\Pr[M^{(i-1)}=j']\cdot \Pr[j\in\Rep_i|M^{(i-1)}=j']\cdot z^{j-1}\nonumber
    \\
    &=\sum_{j=1}^\infty\sum_{j'=0}^{j-1}\Pr[M^{(i-1)}=j']\cdot  \Pr[j-j'\in \Rep]\cdot z^{j-1}\nonumber
    \\
    &=\sum_{j'=0}^\infty \Pr[M^{(i-1)}=j']\cdot \sum_{j=j'+1}^\infty Pr[j-j'\in \Rep]\cdot z^{j-1}\nonumber
    \\
    &=\sum_{j'=0}^\infty \Pr[M^{(i-1)}=j'] z^{j'}\cdot \sum_{j=1}^\infty \Pr[j\in \Rep]\cdot z^{j-1}\nonumber
    \\
    &=g_M(z)^{i-1}\cdot g_{W_R}(z)\cdot \E[|\Rep|].%\label{eq:Probj}
\end{align}
We can interchange the sums above because $z$ is in the disk of convergence of $g_M$ and $g_{W_R}$.
The last step follows from the definition of $W_R$.
Hence,
\begin{equation}
    \sum_{i=1}^n x_i\cdot g_{W_R}(z)\cdot \E[|\Rep|]=P_x(g_M(z))\cdot g_{W_R}(z)\cdot \E[|\Rep|].
\end{equation}

\noindent
The remaining claims are proved in an identical way, by replacing $\Rep$ and $g_{W_R}$ with the appropriate distribution and power series.
\end{proof}

\end{proof}

We prove Lemma~\ref{lem:choicez} using the standard inverse function theorem stated below.
\begin{lem}[\protect{\cite[Section VIII.4]{Gam03}, adapted}]\label{lem:invfunc}
Let $g:\Omega\to\C$ be a non-constant function analytic on a connected open set $\Omega\subseteq \C$ such that $g'(z)\neq 0$ for a given $z\in\Omega$.
Then, there exist radii $\rho,\eps>0$
such that for every $w\in \cD_\eps(g(z))$ there exists a unique $z_w\in \cD_\rho(z)$ satisfying $g(z_w)=w$.
Moreover, the inverse function $f:\cD_\eps(g(z))\to \cD_\rho(z)$ defined as \mbox{$f(w)=z_w$} is analytic on $\cD_\eps(g(z))$.
\end{lem}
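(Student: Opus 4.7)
The plan is to prove this local inverse function theorem by using Rouch\'e's theorem to count preimages near the distinguished point, and then to deduce analyticity of the inverse from an integral formula coming from the argument principle. Write $w_0 := g(z)$ for the image of the distinguished point.

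First I would select appropriate radii. Since $g$ is analytic and non-constant on the connected open set $\Omega$, the zeros of the function $\zeta\mapsto g(\zeta)-w_0$ are isolated, and $z$ itself is a \emph{simple} zero because $g'(z)\neq 0$. Using this isolation together with continuity of $g'$, I can fix $\rho>0$ small enough that the closed disk $\overline{\cD_\rho(z)}$ sits inside $\Omega$, that $g(\zeta)\neq w_0$ on $\overline{\cD_\rho(z)}\setminus\{z\}$, and that $g'(\zeta)\neq 0$ throughout $\overline{\cD_\rho(z)}$. By compactness of the bounding circle,
\[
    m := \min_{|\zeta-z|=\rho}\,\abs{g(\zeta)-w_0}\;>\;0,
\]
and I take $\delta := m/2$.

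Next, for each $w\in\cD_\delta(w_0)$ I would apply Rouch\'e's theorem on $|\zeta-z|=\rho$ to the decomposition $g(\zeta)-w=(g(\zeta)-w_0)-(w-w_0)$. On that circle $\abs{g(\zeta)-w_0}\ge m>\delta>\abs{w-w_0}$, so $g(\zeta)-w$ has the same number of zeros inside $\cD_\rho(z)$ as $g(\zeta)-w_0$, namely exactly one counted with multiplicity. That zero is simple because $g'$ does not vanish on $\overline{\cD_\rho(z)}$, hence there is a \emph{unique} $z_w\in\cD_\rho(z)$ with $g(z_w)=w$, and I set $f(w):=z_w$.

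For the analyticity claim I would invoke the argument principle to give the explicit formula
\[
    f(w)\;=\;\frac{1}{2\pi i}\oint_{|\zeta-z|=\rho}\zeta\cdot\frac{g'(\zeta)}{g(\zeta)-w}\,d\zeta,\qquad w\in\cD_\delta(w_0),
\]
since the contour integral on the right evaluates to the sum of the zeros of $\zeta\mapsto g(\zeta)-w$ inside $\cD_\rho(z)$, each counted with multiplicity, which is exactly $z_w$. The denominator satisfies $\abs{g(\zeta)-w}\ge m-\delta>0$ uniformly in $\zeta$ on the contour and $w\in\cD_\delta(w_0)$, so the integrand is jointly continuous in $(\zeta,w)$ and holomorphic in $w$; differentiation under the integral sign then shows $f$ is analytic on $\cD_\delta(w_0)$. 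The proof is essentially a careful packaging of standard complex-analytic tools, and the only mild care required is the simultaneous shrinking of $\rho$ so that isolation of the zero of $g(\cdot)-w_0$, non-vanishing of $g'$, and the Rouch\'e bound $\abs{w-w_0}<m$ all hold on the same closed disk.
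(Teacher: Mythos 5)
Your proof is correct, and it is exactly the standard Rouch\'e-plus-argument-principle argument that Gamelin gives in the cited Section VIII.4; the paper itself does not reproduce a proof of Lemma~\ref{lem:invfunc} and simply refers to that source. In particular, your choice $\delta = m/2$ with $m = \min_{|\zeta-z|=\rho}|g(\zeta)-w_0|$, the Rouch\'e count giving a unique simple preimage, and the contour-integral formula $f(w)=\frac{1}{2\pi i}\oint_{|\zeta-z|=\rho}\zeta\,\frac{g'(\zeta)}{g(\zeta)-w}\,d\zeta$ with differentiation under the integral sign all match the reference.
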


\begin{proof}[Proof of Lemma~\ref{lem:choicez}]
Because $M$ is sub-exponential and not always zero,
$g_M$ is a non-constant analytic function on some open ball $\cD_r(0)$ of radius $r>1$ that satisfies \mbox{$g'_M(1)=\E[M]\neq 0$}.
Hence, Lemma~\ref{lem:invfunc} applies with \mbox{$g=g_M$}, so there exist
$\rho,\eps>0$ and an analytic function \mbox{$f:\cD_\eps(1)\to\cD_\rho(1)$} such that $g_M(f(w))=w$.
In particular, there exists $\gamma\in(0,\eps)$ such that for every $w\in\cD_\gamma(1)$ we can write
\begin{equation}\label{eq:Taylorexp}
    f(w)=1+f'(1)(w-1)+\sum_{i=2}^\infty \frac{f^{(i)}(1)}{i!}(w-1)^i.
\end{equation}
This is because $f(1)=1$, since $g(1)=1$, and furthermore
\begin{equation*}
    \sum_{i=2}^\infty \abs{\frac{f^{(i)}(1)}{i!}}\cdot \abs{w-1}^i\leq c''\abs{w-1}^2
\end{equation*}
for some constant $c''>0$ and all such $w$.
Assume that $L$ is large enough so that $e^{i\varphi}\in\cD_\gamma(1)$ for all $\varphi\in\left[-\frac{\pi}{L},\frac{\pi}{L}\right]$.
Then, we set $z_\varphi=f(e^{i\varphi})$.
Note that \mbox{$g_M(z_\varphi)=e^{i\varphi}$} by the definition of $f$, as required.
Combining~\eqref{eq:Taylorexp} with $w=e^{i\varphi}$ and the triangle inequality, we have
\begin{equation*}
    \abs{z_\varphi-1}=O\left( \left|e^{i\varphi}-1\right| \right)\to 0
\end{equation*}
as $L\to\infty$.
Since $g_W$ is a continuous function on a neighborhood of $1$, and $g_W(1)=1$, it follows that $\abs{g_W(z_\varphi)}\geq 1/2$ if $L$ is large enough.
On the other hand, combining~\eqref{eq:Taylorexp} with the fact that
\begin{equation*}
    f'(1)=\frac{1}{g'(f(1))}=\frac{1}{g'(1)}=\frac{1}{\E[M]}\in\R,
\end{equation*}
by the chain rule, we obtain
\begin{align*}
    \abs{z_\varphi}&\leq \abs{1+\frac{e^{i\varphi}-1}{\E[M]}}+c''\abs{e^{i\varphi}-1}^2\\
    &\leq\sqrt{\left(1-\frac{1-\cos(\varphi)}{\E[M]}\right)^2+\frac{\sin(\varphi)^2}{\E[M]^2}}+c''\varphi^2\\
    &\leq \sqrt{1+\frac{2(1-\cos(\varphi))}{\E[M]^2}}+c''\varphi^2\\
    &\leq 1+\left(\frac{1}{\E[M]^2}+c''\right)\varphi^2.
\end{align*}
The second inequality holds because $\abs{e^{i\varphi}-1}\leq \varphi$.
The last inequality follows by noting that $1-\cos(\varphi)\leq\varphi^2/2$ and $\sqrt{1+x}\leq 1+x$ for $x\geq 0$.
Finally, we prove the last inequality in the statement.
By definition,
\begin{equation}
    z_L=f(e^{i\varphi_L})=1+f'(1)(e^{i\varphi_L}-1)+\sum_{j=2}^\infty\frac{f^{(j)}(1)}{j!}(e^{i\varphi_L}-1)^j.
\end{equation}
Here  $f'(1)\neq 0$ for a sufficiently large $L$. Then since $|e^{i\varphi_L}-1|\geq -|\varphi_L|$, we can write
    \begin{align*}
        |z_L-1|
            &= \abs{f'(1)(e^{i\varphi_L}-1)+\sum_{j=2}^\infty
                \frac{f^{(j)}(1)}{j!}(e^{i\varphi_L}-1)^j} \\
            &\geq \abs{f'(1)}\abs{e^{i\varphi_L}-1}+
                \abs{\sum_{j=2}^\infty\frac{f^{(j)}(1)}{j!}(e^{i\varphi_L}-1)^j} \\
            &= \Theta(\varphi_L)-\Theta(\varphi_L^2)\\
            &= \Theta(\varphi_L).
    \end{align*}
    Using this with the fact that $\varphi_L\geq \frac{1}{2L}$ by definition, we obtain $|1-z_L| = |z_L-1|\geq \frac{c}{2L}$ for some constant $c$.
\end{proof}

\section{Proof of Lemma~\ref{lem:tail}}\label{sec:prooftail}
To conclude the argument, we prove Lemma~\ref{lem:tail} using an argument analogous to~\cite[Appendix A.2]{DOS17} and the fact that $M$ is sub-exponential.

Let $M_1,M_2,\dots,M_n$ be i.i.d.\ according to $M$, and set $M^{(n)}=\sum_{i=1}^n M_i$.
Then, we have
\begin{equation*}
    \abs{\mu_{x,i}-\mu_{x',i}}\leq \Pr[M^{(n)}\geq i]
\end{equation*}
for every $i$.
Since $M$ is sub-exponential, a direct application of Bernstein's inequality~\cite[Theorem 2.8.1]{Ver18} guarantees the existence of constants $c_4,c_6>0$ such that for $N=c_4 n$ and any $j\geq 1$ we have
\begin{equation*}
    \Pr[M^{(n)}\geq N+j]\leq 2e^{-c_6(N+j)}.
\end{equation*}
Combining these observations with the assumption that \mbox{$|z|\leq e^{c_3 n^{-2/3}}$} yields
\begin{align*}
        \sum_{i=N+1}^\infty \abs{\mu_{x,i}-\mu_{x',i}}|z|^{i-1}&\leq \sum_{j=1}^\infty 2 e^{-c_6(N+j)}\cdot e^{c_3 n^{-2/3}} \\
        &\leq e^{-c_5 n}
\end{align*}
for some constant $c_5>0$ and $n$ large enough.

\section{The Conditions in Theorem~\ref{thm:mainresult} are Necessary}

 Our main result applies to oblivious synchronization channels that meet two conditions:  $M$ must be a sub-exponential random variable and distributions formed from $\Rep$ and $\Flip$ must not satisfy a certain equality. It is natural to ask whether these conditions are necessary. We do not know whether the sub-exponentiality is necessary, but we now demonstrate that the requirements on $\Rep$ and $\Flip$ are.

\begin{coro}\label{coro:asumptionneeded}
Let $\channel$ be an oblivious synchronization channel where $M$ is a sub-exponential random variable. If $\E[|\Rep|] \cdot g_{W_R}(z) = \E[|\Flip|] \cdot g_{W_F}(z)$, then mean-based trace reconstruction is impossible.
\end{coro}
\begin{proof}
Let $x,x' \in \obits^n$. We show that their mean traces are identical. By combining our assumption that $\E[|\Rep|] \cdot g_{W_R}(z) = \E[|\Flip|] \cdot g_{W_F}(z)$ with corollary~\ref{lem:meantracediff}, we have that
\begin{align*}
\abs{\overline{P}_x(z)-\overline{P}_{x'}(z)} &=\abs{P_x(g_M(z))-P_{x'}(g_M(z))} \cdot \abs{\E[|\Rep|]\cdot g_{w_R}(z)-\E[|\Flip|]\cdot g_{w_F}(z)} \\
&= 0
\end{align*}
which implies that the power series $\overline{P}_x(z)-\overline{P}_{x'}(z)$ has all zero coefficients. This is the same as saying that for all $i$, $\mu_{x,i} = \mu_{x',i}$. This holds for any pair of messages $x,x'$, so all possible messages result in the same mean trace. Thus, mean-based trace reconstruction is impossible.
\end{proof}

This corollary by itself does not show that we have to state the assumption $\E[|\Rep|] \cdot g_{W_R}(z) \neq \E[|\Flip|] \cdot g_{W_F}(z)$ in our theorem.
If all the channels we are concerned with had this property, then the theorem's assumption would be redundant and the above corollary would be vacuously true.

It turns out that there are channels where this assumption does not hold. We prove that such channels exist by describing one. Specifically, we describe an oblivious synchronization channel where $\E[|\Rep|] \cdot g_{W_R}(z) = \E[|\Flip|] \cdot g_{W_F}(z)$ and mean-based trace reconstruction is impossible, but (non-mean-based) trace reconstruction is easy.

\subsection{A Channel Where Mean-based Trace Reconstruction is Impossible}
Let $M=2$ be a constant, i.e. for each input bit the channel always outputs two bits. $C_- = \emptyset$ always. $\textsf{Rep}, \textsf{Flip},$ and $C_+$ are jointly distributed among three equally likely outcomes.

\begin{center}
\begin{tabular}{llll}
 & \textsf{Rep} & \textsf{Flip}  & $C_+$  \\ \hline
\multicolumn{1}{l|}{Output 1} & $\left\{1,2  \right\}$ & $\emptyset$  & $\emptyset$  \\
 \multicolumn{1}{l|}{Output 2} & $\emptyset$ & $\left\{ 1  \right\}$  & $\left\{ 1  \right\}$  \\
 \multicolumn{1}{l|}{Output 3} & $\emptyset$ & $\left\{ 2  \right\}$  & $\left\{ 1  \right\}$
\end{tabular}
\end{center}

Equivalently, for a single input bit, the channel outputs the following three strings with equal probability.
\begin{center}
\begin{tabular}{llll}
\multicolumn{1}{l|}{Input} & $-1$ &  $+1$ &  \\ \hline
\multicolumn{1}{l|}{Output 1} & $(-1, -1)$    & $(+1,+1)$     &  \\
\multicolumn{1}{l|}{Output 2} & $(+1,+1)$     & $(-1,+1)$     &  \\
\multicolumn{1}{l|}{Output 3} & $(+1,+1)$     & $(+1,-1)$     &
\end{tabular}
\end{center}

Regardless of input bit, both output bits have expected value $1/3$. Thus, mean-based trace reconstruction is impossible.

However, trace reconstruction is easy. On input $-1$, the channel outputs $-1,-1$ with probability $1/3$. On input $+1$, the channel never outputs $-1,-1$. This can be used to distinguish any pair of strings. Let $x,y \in \obits^n$ be two distinct strings.
Without loss of generality, suppose $x_i = -1$ and $y_i = +1$ for an arbitrary index $i$. Let $\textsf{Tr}(x)$ and $\textsf{Tr}(y)$ denote traces of $x$ and $y$. Then
\[ \Pr\left[ \textsf{Tr}(x)_{2i-1} = \textsf{Tr}(x)_{2i}  = -1 \right] = 1/3\]
 and
\[ \Pr\left[ \textsf{Tr}(y)_{2i-1} = \textsf{Tr}(y)_{2i}  = -1 \right] = 0.\] This leads to a simple reconstruction algorithm, which outputs a string \mbox{$z \in \obits^n$} after looking at the traces of a string $z_{\text{true}} \in \obits^n$.
\begin{enumerate}
\item
Look at $t$ independent traces of $z_{\text{true}}$.

\item
If any of the traces have $(-1,-1)$ in the $(2i - 1)^\text{th}$ and $(2i)^\text{th}$ coordinates, set $z_{i} = -1$. Otherwise, set $z_i = +1$.

\item
Output $z$.
\end{enumerate}
To analyze this algorithm's correctness, we consider the probability that a fixed coordinate is correct, then union bound over all coordinates. Let $j \in [n]$ be a fixed coordinate. If $(z_{\text{true}})_{j} = +1$, then $z_{j}=+1$ with probability $1$. If $(z_{\text{true}})_{j} = -1$, then $z_{j} = +1$ with probability $\prn{2/3}^t$. Combining this with a union bound over the $n$ coordinates, we get
\begin{align*}
\Pr \brk{z_{\text{guess}} \neq z_{\text{true}}} &\leq n \cdot \Pr \brk{\prn{z_{\text{guess}}}_j \neq \prn{z_{\text{true}}}_{j}} %&& \text{union bound} \\
\\
&\leq n \cdot \prn{2/3}^t.
\end{align*}
Hence,  $O(\log(n/\delta))$ traces are enough to reconstruct the string with probability at least $1-\delta$.

\section{Future Work}

We have shown that $\exp(O(n^{1/3}))$ traces suffice for mean-based worst-case trace reconstruction over a broad class of oblivious synchronization channels.
Because $\exp(\Omega(n^{1/3}))$ traces are required for mean-based worst-case trace reconstuction over the deletion channel, this means that our result cannot be improved in general.
However, our channel model does not cover all discrete memoryless synchronization channels as defined by Dobrushin~\cite{Dob67,CR20}.
It would be interesting to extend our result in some form to all such non-trivial channels.
On the other hand, to complement the above, it would be interesting to prove trace complexity lower bounds for mean-based reconstruction over all these channels.
Furthermore, it is unclear whether the assumption that $M$ is sub-exponential is necessary for our result.
A clear extension of this work would be to either remove this condition or prove that it is necessary for mean-based trace reconstruction from $\exp(O(n^{1/3}))$ traces.
%there is a channel that is not sub-exponential and requires more than $\exp(O(n^{1/3}))$ traces.

%\newpage
\bibliographystyle{IEEEtran}
\bibliography{refs}

% Generated by IEEEtran.bst, version: 1.14 (2015/08/26)
\begin{thebibliography}{10}
\providecommand{\url}[1]{#1}
\csname url@samestyle\endcsname
\providecommand{\newblock}{\relax}
\providecommand{\bibinfo}[2]{#2}
\providecommand{\BIBentrySTDinterwordspacing}{\spaceskip=0pt\relax}
\providecommand{\BIBentryALTinterwordstretchfactor}{4}
\providecommand{\BIBentryALTinterwordspacing}{\spaceskip=\fontdimen2\font plus
\BIBentryALTinterwordstretchfactor\fontdimen3\font minus
  \fontdimen4\font\relax}
\providecommand{\BIBforeignlanguage}[2]{{%
\expandafter\ifx\csname l@#1\endcsname\relax
\typeout{** WARNING: IEEEtran.bst: No hyphenation pattern has been}%
\typeout{** loaded for the language `#1'. Using the pattern for}%
\typeout{** the default language instead.}%
\else
\language=\csname l@#1\endcsname
\fi
#2}}
\providecommand{\BIBdecl}{\relax}
\BIBdecl

\bibitem{CDRV21}
M.~Cheraghchi, J.~Downs, J.~Ribeiro, and A.~Veliche, ``Mean-based trace
  reconstruction over practically any replication-insertion channel,'' in
  \emph{2021 IEEE International Symposium on Information Theory (ISIT)}, 2021,
  pp. 2459--2464.

\bibitem{Lev01}
V.~I. {Levenshtein}, ``Efficient reconstruction of sequences,'' \emph{IEEE
  Transactions on Information Theory}, vol.~47, no.~1, pp. 2--22, Jan 2001.

\bibitem{Lev01b}
------, ``Efficient reconstruction of sequences from their subsequences or
  supersequences,'' \emph{Journal of Combinatorial Theory, Series A}, vol.~93,
  no.~2, pp. 310--332, 2001.

\bibitem{BKKM04}
T.~Batu, S.~Kannan, S.~Khanna, and A.~McGregor, ``Reconstructing strings from
  random traces,'' in \emph{Proceedings of the 15th Annual ACM-SIAM Symposium
  on Discrete Algorithms (SODA)}, 2004, pp. 910--918.

\bibitem{KM05}
S.~{Kannan} and A.~{McGregor}, ``More on reconstructing strings from random
  traces: insertions and deletions,'' in \emph{2005 IEEE International
  Symposium on Information Theory (ISIT)}, 2005, pp. 297--301.

\bibitem{VS08}
K.~Viswanathan and R.~Swaminathan, ``Improved string reconstruction over
  insertion-deletion channels,'' in \emph{Proceedings of the 19th Annual
  ACM-SIAM Symposium on Discrete Algorithms (SODA)}, 2008, pp. 399--408.

\bibitem{Dob67}
R.~L. Dobrushin, ``Shannon's theorems for channels with synchronization
  errors,'' \emph{Problemy Peredachi Informatsii}, vol.~3, no.~4, pp. 18--36,
  1967.

\bibitem{CR20}
M.~{Cheraghchi} and J.~{Ribeiro}, ``An overview of capacity results for
  synchronization channels,'' \emph{IEEE Transactions on Information Theory},
  2020, to appear. DOI: 10.1109/TIT.2020.2997329.

\bibitem{HMPW08}
T.~Holenstein, M.~Mitzenmacher, R.~Panigrahy, and U.~Wieder, ``Trace
  reconstruction with constant deletion probability and related results,'' in
  \emph{Proceedings of the 19th Annual ACM-SIAM Symposium on Discrete
  Algorithms (SODA)}, 2008, pp. 389--398.

\bibitem{DOS17}
A.~De, R.~O’Donnell, and R.~A. Servedio, ``Optimal mean-based algorithms for
  trace reconstruction,'' \emph{Annals of Applied Probability}, vol.~29, no.~2,
  pp. 851--874, Apr 2019.

\bibitem{NP17}
F.~Nazarov and Y.~Peres, ``Trace reconstruction with $\exp({O}(n^{1/3}))$
  samples,'' in \emph{Proceedings of the 49th Annual ACM SIGACT Symposium on
  Theory of Computing (STOC)}, 2017, pp. 1042--1046.

\bibitem{YKG+15}
S.~M. H.~T. {Yazdi}, H.~M. {Kiah}, E.~{Garcia-Ruiz}, J.~{Ma}, H.~{Zhao}, and
  O.~{Milenkovic}, ``{DNA}-based storage: Trends and methods,'' \emph{IEEE
  Transactions on Molecular, Biological and Multi-Scale Communications},
  vol.~1, no.~3, pp. 230--248, 2015.

\bibitem{YGM17}
S.~M. H.~T. Yazdi, R.~Gabrys, and O.~Milenkovic, ``Portable and error-free
  {DNA}-based data storage,'' \emph{Scientific reports}, vol.~7, no.~1, p.
  5011, 2017.

\bibitem{OAC+18}
L.~Organick, S.~D. Ang, Y.-J. Chen, R.~Lopez, S.~Yekhanin, K.~Makarychev, M.~Z.
  Racz, G.~Kamath, P.~Gopalan, B.~Nguyen \emph{et~al.}, ``Random access in
  large-scale {DNA} data storage,'' \emph{Nature biotechnology}, vol.~36,
  no.~3, p. 242, 2018.

\bibitem{Mit09}
M.~Mitzenmacher, ``A survey of results for deletion channels and related
  synchronization channels,'' \emph{Probability Surveys}, vol.~6, pp. 1--33,
  2009.

\bibitem{MBT10}
H.~{Mercier}, V.~K. {Bhargava}, and V.~{Tarokh}, ``A survey of error-correcting
  codes for channels with symbol synchronization errors,'' \emph{IEEE
  Communications Surveys Tutorials}, vol.~12, no.~1, pp. 87--96, First Quarter
  2010.

\bibitem{HS21}
B.~Haeupler and A.~Shahrasbi, ``Synchronization strings and codes for
  insertions and deletions -- a survey,'' \emph{IEEE Transactions on
  Information Theory}, 2021, to appear. Available at
  \url{https://arxiv.org/abs/2101.00711}.

\bibitem{HHP17}
L.~Hartung, N.~Holden, and Y.~Peres, ``Trace reconstruction with varying
  deletion probabilities,'' in \emph{Proceedings of the 15th Workshop on
  Analytic Algorithmics and Combinatorics (ANALCO)}, 2018, pp. 54--61.

\bibitem{PZ17}
Y.~{Peres} and A.~{Zhai}, ``Average-case reconstruction for the deletion
  channel: Subpolynomially many traces suffice,'' in \emph{2017 IEEE 58th
  Annual Symposium on Foundations of Computer Science (FOCS)}, Oct 2017, pp.
  228--239.

\bibitem{HPP18}
N.~Holden, R.~Pemantle, and Y.~Peres, ``Subpolynomial trace reconstruction for
  random strings and arbitrary deletion probability,'' in \emph{Proceedings of
  the 31st Conference On Learning Theory (COLT)}, 2018, pp. 1799--1840.

\bibitem{DRR19}
S.~Davies, M.~Z. Racz, and C.~Rashtchian, ``Reconstructing trees from traces,''
  in \emph{Proceedings of the 32nd Conference on Learning Theory (COLT)}, 2019,
  pp. 961--978.

\bibitem{KMMP19}
A.~Krishnamurthy, A.~Mazumdar, A.~McGregor, and S.~Pal, ``Trace reconstruction:
  Generalized and parameterized,'' in \emph{27th Annual European Symposium on
  Algorithms (ESA)}, 2019, pp. 68:1--68:25.

\bibitem{NR21}
\BIBentryALTinterwordspacing
S.~Narayanan and M.~Ren, ``{Circular Trace Reconstruction},'' in \emph{12th
  Innovations in Theoretical Computer Science Conference (ITCS 2021)}, ser.
  Leibniz International Proceedings in Informatics (LIPIcs), J.~R. Lee, Ed.,
  vol. 185.\hskip 1em plus 0.5em minus 0.4em\relax Dagstuhl, Germany: Schloss
  Dagstuhl--Leibniz-Zentrum f{\"u}r Informatik, 2021, pp. 18:1--18:18.
  [Online]. Available:
  \url{https://drops.dagstuhl.de/opus/volltexte/2021/13557}
\BIBentrySTDinterwordspacing

\bibitem{GSZ21}
E.~Grigorescu, M.~Sudan, and M.~Zhu, ``Limitations of mean-based algorithms for
  trace reconstruction at small distance,'' in \emph{2021 IEEE International
  Symposium on Information Theory (ISIT)}, 2021, pp. 2531--2536.

\bibitem{SB21}
J.~Sima and J.~Bruck, ``Trace reconstruction with bounded edit distance,'' in
  \emph{2021 IEEE International Symposium on Information Theory (ISIT)}, 2021,
  pp. 2519--2524.

\bibitem{C21}
\BIBentryALTinterwordspacing
Z.~Chase, \emph{Separating Words and Trace Reconstruction}.\hskip 1em plus
  0.5em minus 0.4em\relax New York, NY, USA: Association for Computing
  Machinery, 2021, p. 21–31. [Online]. Available:
  \url{https://doi.org/10.1145/3406325.3451118}
\BIBentrySTDinterwordspacing

\bibitem{CDLSS20}
X.~Chen, A.~De, C.~H. Lee, R.~A. Servedio, and S.~Sinha, ``Polynomial-time
  trace reconstruction in the smoothed complexity model,'' in \emph{Proceedings
  of the 2021 ACM-SIAM Symposium on Discrete Algorithms (SODA)}, 2021, pp.
  54--73.

\bibitem{MPV14}
A.~McGregor, E.~Price, and S.~Vorotnikova, ``Trace reconstruction revisited,''
  in \emph{22nd Annual European Symposium on Algorithms (ESA)}, 2014, pp.
  689--700.

\bibitem{CDLSS20b}
X.~{Chen}, A.~{De}, C.~H. {Lee}, R.~A. {Servedio}, and S.~{Sinha},
  ``Polynomial-time trace reconstruction in the low deletion rate regime,''
  \emph{arXiv e-prints}, p. arXiv:2012.02844, Dec. 2020, to appear in ITCS
  2021.

\bibitem{HL20}
N.~Holden and R.~Lyons, ``Lower bounds for trace reconstruction,'' \emph{Ann.
  Appl. Probab.}, vol.~30, no.~2, pp. 503--525, Apr. 2020.

\bibitem{Cha20}
\BIBentryALTinterwordspacing
Z.~Chase, ``New lower bounds for trace reconstruction,'' \emph{Annales de
  l'Institut Henri Poincaré, Probabilités et Statistiques}, vol.~57, no.~2,
  pp. 627 -- 643, 2021. [Online]. Available:
  \url{https://doi.org/10.1214/20-AIHP1089}
\BIBentrySTDinterwordspacing

\bibitem{CGMR20}
M.~{Cheraghchi}, R.~{Gabrys}, O.~{Milenkovic}, and J.~{Ribeiro}, ``Coded trace
  reconstruction,'' \emph{IEEE Transactions on Information Theory}, vol.~66,
  no.~10, pp. 6084--6103, 2020.

\bibitem{BLS19}
J.~{Brakensiek}, R.~{Li}, and B.~{Spang}, ``Coded trace reconstruction in a
  constant number of traces,'' in \emph{2020 IEEE 61st Annual Symposium on
  Foundations of Computer Science (FOCS)}, 2020, pp. 482--493.

\bibitem{DRRS20}
S.~Davies, M.~Z. Rácz, B.~G. Schiffer, and C.~Rashtchian, ``Approximate trace
  reconstruction: Algorithms,'' in \emph{2021 IEEE International Symposium on
  Information Theory (ISIT)}, 2021, pp. 2525--2530.

\bibitem{CDK21}
D.~{Chakraborty}, D.~{Das}, and R.~{Krauthgamer}, ``{Approximate Trace
  Reconstruction via Median String (in Average-Case)},'' \emph{arXiv e-prints},
  p. arXiv:2107.09497, Jul. 2021.

\bibitem{CP21}
Z.~{Chase} and Y.~{Peres}, ``{Approximate trace reconstruction of random
  strings from a constant number of traces},'' \emph{arXiv e-prints}, p.
  arXiv:2107.06454, Jul. 2021.

\bibitem{CDLSS21}
X.~{Chen}, A.~{De}, C.~H. {Lee}, R.~A. {Servedio}, and S.~{Sinha},
  ``{Near-Optimal Average-Case Approximate Trace Reconstruction from Few
  Traces},'' \emph{arXiv e-prints}, p. arXiv:2107.11530, Jul. 2021.

\bibitem{BPRS20}
V.~Bhardwaj, P.~A. Pevzner, C.~Rashtchian, and Y.~Safonova, ``Trace
  reconstruction problems in computational biology,'' \emph{IEEE Transactions
  on Information Theory}, vol.~67, no.~6, pp. 3295--3314, 2021.

\bibitem{BCFSS19}
F.~{Ban}, X.~{Chen}, A.~{Freilich}, R.~A. {Servedio}, and S.~{Sinha}, ``Beyond
  trace reconstruction: Population recovery from the deletion channel,'' in
  \emph{2019 IEEE 60th Annual Symposium on Foundations of Computer Science
  (FOCS)}, Nov 2019.

\bibitem{BCSS19}
F.~Ban, X.~Chen, R.~A. Servedio, and S.~Sinha, ``Efficient average-case
  population recovery in the presence of insertions and deletions,'' in
  \emph{Approximation, Randomization, and Combinatorial Optimization.
  Algorithms and Techniques (APPROX/RANDOM 2019)}, 2019, pp. 44:1--44:18.

\bibitem{Nar20}
S.~Narayanan, ``Improved algorithms for population recovery from the deletion
  channel,'' in \emph{Proceedings of the 2021 ACM-SIAM Symposium on Discrete
  Algorithms (SODA)}, 2021, pp. 1259--1278.

\bibitem{BE97}
P.~Borwein and T.~Erd{\'e}lyi, ``Littlewood-type problems on subarcs of the
  unit circle,'' \emph{Indiana University mathematics journal}, pp. 1323--1346,
  1997.

\bibitem{Gam03}
T.~Gamelin, \emph{Complex Analysis}, ser. Undergraduate Texts in
  Mathematics.\hskip 1em plus 0.5em minus 0.4em\relax Springer New York, 2001.

\bibitem{Ver18}
R.~Vershynin, \emph{High-Dimensional Probability: An Introduction with
  Applications in Data Science}, ser. Cambridge Series in Statistical and
  Probabilistic Mathematics.\hskip 1em plus 0.5em minus 0.4em\relax Cambridge
  University Press, 2018.

\end{thebibliography}

\end{document}